\newtheorem{utv}{Theorem}
\newtheorem{slt}{Corollary}
\newenvironment{proof}[1][Proof]
{\textbf{#1.} }{\ \rule{0.5em}{0.5em}}
\begin{document}

\title{Yang--Mills gauge fields conserving the symmetry algebra of the Dirac equation in a homogeneous space}

\author{A.I. Breev\footnotemark[1] \footnotemark[2]\ \ and A.V. Shapovalov\footnotemark[1] \footnotemark[2]}

\date{}

\renewcommand{\thefootnote}{\fnsymbol{footnote}}
\footnotetext[1]{Tomsk State University, Lenin avenue 36, Tomsk 634050, Russia.}
\footnotetext[2]{Tomsk Polytechnic University, Lenin avenue 30, Tomsk 634050, Russia.}

\maketitle

\begin{abstract}
We consider the Dirac equation with an external Yang--Mills gauge field in a homogeneous space with an invariant metric. The Yang--Mills fields for which the motion group of the space serves as the  symmetry group for  the Dirac equation are found by comparison of the Dirac equation with an invariant matrix differential operator of the first order. General constructions are illustrated by the example of de Sitter space. The eigenfunctions  and the corresponding  eigenvalues for the Dirac equation are obtained in the space $\mathbb{R}^2\times \mathbb{S}^2$  by a noncommutative integration method.
\end{abstract}

\section*{Introduction}

Exact integration of relativistic wave equations for strong external fields is the topical problem in studying various effects in quantum field theory and cosmology where the standard $S$-matrix method does not work \cite{Birrel}.

The main technique for exact integration of the equations is based on the classical method of separation of variables (SoV) \cite{Kalnins,Kalnins2,Miller}. There are a large number of works dealing with classification of external fields admitting separation of variables in relativistic quantum equations (see, e.g.,  \cite{Bagrov} and references therein). In this connection, the  integration of relativistic wave equations with external fields by means of methods other than the SoV method can provide new possibilities to study the relativistic quantum wave  equations for classical and quantum fields and their interactions.

A new method of exact integration of linear partial differential equations was proposed \cite{SpSh1} and applied to quantum equations \cite{ShSO3,ShBr11,ShBr14}. This method differs from the classical SoV method and uses non-commutative algebras to describe the symmetry operators of the equation under consideration.

In this work, we consider the non-commutative symmetries of the Dirac equation with the potential of an external gauge Yang-Mills field on some homogeneous space.

The generators of a Lie transformation  group acting on the homogeneous space form a non-commutative symmetry algebra for the Dirac equation. So a Lie transformation group of the homogeneous space will be the symmetry group of the Dirac equation. In general,  the symmetry of the Dirac equation breaks down in the presence of an external gauge field. Our aim is to answer the question: what are the  Yang-Mills gauge fields which do not destroy the symmetry group of the Dirac equation in the homogeneous space? Note that this problem was solved for the Klein-Gordon equation \cite{KyrnScalar}. The symmetry of the Dirac equation in the presence of external gauge fields was also studied by the authors of Refs. \cite{BagrShD} and \cite{BagrQG}. We adopt natural units $\hbar = c = G = 1$, unless stated otherwise.

The paper is organized as follows: In Section \ref{invariant-mertic} we briefly describe the
necessary concepts and notations related to homogeneous spaces \cite{DiffG, Kobyashi}.

The construction of an invariant differential operator with matrix coefficients on a homogeneous space is introduced following Refs. \cite{KyrnScalar, BarProlong},in Section \ref{invariant-matrix-operator}.

In Section \ref{dirac-eq-in homogen-space} we present the Dirac equation on a homogeneous space with an invariant metric tensor in terms of a first-order invariant matrix operator. The spinor connection and the symmetry operators of the Dirac equation are shown to define the isotropy representation of a spinor space. The generators of the spinor representation are found explicitly.

 Next, in Section \ref{dirac-eq-gauge-field},  we find the gauge potentials of an external Yang--Mills field that do not change the symmetry algebra of the Dirac equation. For these potentials, the Dirac equation is presented as a system of equations on a motion group.

An illustration of the general results obtained is given in Section \ref{dirac-eq-r2-s2} by the example of the homogeneous space $\mathbb{R}^2\otimes \mathbb{S}^2$ with an invariant metric.  The homogeneous space is shown to admit the external magnetic field preserving the symmetry algebra of the Dirac equation. The spectrum of the Dirac equation and the corresponding eigenstates are found by a noncommutative integration method.

In the final Section \ref{de-sitter}, we find the Yang--Mills gauge fields preserving the symmetry of the Dirac equation in de Sitter space.

In Section \ref{conclusion} we  give our conclusion remarks.

\section{An invariant metric on a homogeneous space}
\label{invariant-mertic}

Here we provide some basic concepts and notations relevant to the theory of homogeneous spaces.

Let $G$ be a  simply connected real Lie group with a Lie algebra  $\mathfrak{g}$ and let $M$  be a homogeneous space  with right action of the group $G$. For any  $x\in M$ there exists an isotropy subgroup $H_x  \in G$. Denote by $H$ a stationary subgroup of a point $x_0\in M$, and  let $\mathfrak{h}$  be the Lie algebra of $H$. The homogeneous space $M $ is diffeomorphic to the manifold $G/H$ of right cosets $Hg$, where $H$ is the isotropy subgroup. The group of transformations  $G$ can be regarded as a principal bundle $(G,\pi,M,H)$ with the structure group $H$, the base $M$,  and the projection map $\pi: G\rightarrow M$. The Lie algebra $\mathfrak{g}$ is decomposed into a direct sum of subspaces $\mathfrak{g} = \mathfrak {h} \oplus \mathfrak {m}$, where $\mathfrak {m}\simeq T_{x_0} M $ is complement to $ \mathfrak {h}$.

The coordinates of an arbitrary element of $g\in G$ can be written as $g=hs(x)$, $h\in H$, i.e., $g^A=(x^a, h^{\alpha})$, where $A,B,C,\ldots = 1,\dots,\operatorname{dim}\mathfrak{g}; a,b,c,\ldots=1, \dots  ,\operatorname*{dim} M; \alpha,\beta,\gamma,\ldots = 1,\dots,\operatorname{dim}\mathfrak{h}$, and $s: M\rightarrow G $ is a local and smooth section of $G$.

Let us introduce  an invariant metric on the homogeneous space $M$. Suppose that $\textbf{G}$ is a non-degenerate  $Ad_H$-\textit{invariant} quadratic form on a subspace $\mathfrak{m}\subset\mathfrak{g}$, \begin{equation}\label{conds_AdH}
	\textbf{G}(\overline{[X,Y]},\overline{Z})+\textbf{G}(\overline{Y},\overline{[X,Z]})=0,
   \quad X\in \mathfrak{h}, \quad Y,Z\in \mathfrak{g},
\end{equation}
where the bar denotes the projection map of the Lie algebra  $\mathfrak{g} $ onto the subspace $\mathfrak {m}$. The quadratic form $\textbf{G}$ defines an invariant inner product $Ad_H$ on the tangent space $T_{x_0}M \simeq \mathfrak{m}$. By the action of a Lie group $G$ with right-hand shifts $R_g$ ($g\in G$) in the homogeneous space $M$, we define the inner product throughout the space  $M$ as
\begin{equation}\label{inv_metric}
   g_M(\tau , \tau')(x) = \mathbf{G}((R_{g^{-1}})_{*}\tau,(R_{g^{-1}})_{*}\tau'),
   \quad \tau,\tau' \in T_x M,\quad x=\pi (g).
\end{equation}
The $Ad_H$-invariance (\ref{conds_AdH}) is the necessary and sufficient condition
for the inner product (\ref{inv_metric}) to be invariant with respect to the Lie group $G$ action on the homogeneous space $M$. The inner product (\ref{inv_metric}) defines an  \textit{invariant} metric $g_M$ on the homogeneous space $M$ \cite{Kobyashi}. From (\ref{inv_metric}) we can write down the covariant components of the  metric tensor in local coordinates as $(i,j,k,l,\ldots=1,\dots,\operatorname{dim}M)$
\begin{equation}\label{gij_loc}
  g_{ij}(x) =
  G_{a b}\sigma^a_i(x,e_H)\sigma^b_j(x,e_H),
  \quad G_{a b} \equiv \textbf{G}(e_a, e_b),\quad a,b = 1,\dots,\dim M.
\end{equation}
Here $e_a$  are the fixed basis vectors of the space $\mathfrak{m}$, $\sigma^b(g)\equiv - (R_g)^{*}e^b$ is the basis of right-invariant 1-forms, and  $e^b$  are the basis vectors in the dual space $\mathfrak{m}^{*}$: $\langle e_a,e^b\rangle = \delta_a^b$, and $e_H$ is the identity element of $H$. The contravariant components of the metric tensor (\ref{gij_loc}) can be represented as
\begin{equation}
	g^{ij}(x) = G^{a b}\eta_a^i(x,e_H)\eta_b^j(x,e_H),
	\quad G^{a b} = (G_{a b})^{-1},\quad \eta_a^i = (\sigma^a_i)^{-1},
	\nonumber
\end{equation}
where $\eta_a(g) = -(R_g)_* e_a$  are right-invariant vector fields on the Lie group $G$.
On the basis of the algebra $\mathfrak{g}$, the $Ad_H$-invariance condition  (\ref{conds_AdH}) takes the form
\begin{equation}
	G_{ab}C^a_{c \alpha} + G_{ac} C^a_{b \alpha } = 0,
	\label{AdH_comp}
\end{equation}
where $C^A_{AB} = [e_A,e_B]^C$ are the structure constants of $\mathfrak{g}$.

Let us remark that any non-degenerate  symmetric 2-form $G_{ab}$ satisfying condition (\ref{AdH_comp}) defines an invariant metric  $g_M$ on the homogeneous space $M$.

The metric tensor (\ref{gij_loc}) defines the Christoffel symbols of the Levi-Civita connection
as \cite{ShBr11,Kobyashi}
   \begin{align}\label{gamma_P_IJK}
	   \Gamma_{jk}^i(x) = & {\Gamma}_{b c}^a \sigma^b_j(x,e_H)\sigma^c_k(x,e_H)\eta_a^i(x,e_H) - \\ \nonumber
      & -\sigma^b_j(x,e_H)\eta_{b,k}^i(x,e_H) - C_{b\alpha}^a\sigma^b_j(x,e_H)\sigma_k^{\alpha}(x,e_H)\eta_a^i(x,e_H).
   \end{align}
 The coefficients $\Gamma_{bc}^a$ are determined by the components $G^{ab}$ of the quadratic form $\mathbf {G}$ and by the structure constants of the Lie algebra:
  \begin{equation}\label{gamma_P}
     \Gamma_{b c}^a = -\frac{1}{2}C_{bc}^a - \frac{1}{2}G^{ad}
     \left[G_{e c}C_{b d}^e + G_{e b} C_{c d}^e\right].
  \end{equation}
Thus, the Levi-Civita connection is defined by the algebraic properties of a homogeneous space with an invariant metric.

\section{Invariant matrix differential operator of the first order}
\label{invariant-matrix-operator}

In this section, we consider algebraic conditions for a first-order linear differential operator with matrix coefficients invariant on the homogeneous space $M$. We follow Ref. \cite{KyrnScalar}, which presents a study of a more general invariant linear matrix differential operator of the second order was studied.

Denote by $C^{\infty}(M,V)$ and $C^{\infty}(G,V)$ two spaces of functions taking values on a linear space $V$ and defined on a homogeneous space $M$ and on a transformation group $G$, respectively.

The functions on the homogeneous space $M$ can be considered as functions defined on the Lie group $G$ and  invariant along the fibers $H$ of the bundle $G$ \cite{Kobyashi}. In our case, if functions take values on the vector space $V$, the space $C^{\infty}(M,V)$ is isomorphic to the function space
\begin{equation}\nonumber
	\hat{\mathcal{F}} = \{ \varphi \in C^{\infty}(G,V) \mid	\varphi(h g) = U(h)\varphi(g),\quad h\in H  \},
	\label{def_hatF}
\end{equation}
where $U(h)$ is an exact representation of the isotropy group $H$ on $V$. For an arbitrary function $\varphi \in\hat{\mathcal{F}}$, we have
\begin{equation}
	\varphi(g)=\varphi(hs(x))=U(h)\varphi(s(x)), \quad g = (x,h).	
	\label{condF}
\end{equation}
Then we can identify $\varphi(s(x))$ with a function $\varphi\in C^{\infty}(M,V)$. Formula (\ref{condF}) gives an explicit form of the isomorphism $\hat{\mathcal{F}}\simeq C^{\infty}(M,V)$. Differentiating (\ref{condF}) with respect to $h^\alpha$ and assuming $h=e_H$, we obtain:
\begin{equation}
	\left( \eta_{\alpha} + \Lambda_{\alpha} \right)\varphi(g) = 0,\quad
	\Lambda_{\alpha} = \frac{\partial U(h)}{\partial h^\alpha}|_{h = e_H}.
	\label{inf_hatF}
\end{equation}
Here $\Lambda_{\alpha}$ are generators of the group $H$ in the linear space $V$. Formula (\ref{inf_hatF}) is the  infinitesimal consequence of (\ref{condF}). The isotropy subgroup $H$ is assumed to be  connected. Then conditions (\ref{condF}) and (\ref{inf_hatF}) are equivalent.

From (\ref{inf_hatF}) we can see that a linear differential operator $R$ leaves invariant the function space $\hat{ \mathcal{F}}$ if
\begin{equation}
	(\eta_\alpha + \Lambda_\alpha)R\varphi(g) = [\eta_\alpha + \Lambda_\alpha,R]\varphi(g) = 0,\quad \varphi\in\hat{\mathcal{F}}.
	\label{ravFF}
\end{equation}
%

Denote by $L(\hat{\mathcal{F}})$ a space of linear differential operators on $C^{\infty}(G,V)$ satisfying the condition
\begin{equation}
	[ \eta_\alpha + \Lambda_\alpha , R ]|_{\hat{\mathcal{F}}} = 0, \quad
	\alpha = 1,\dots,\dim\mathfrak{h}.
	\label{condLF}
\end{equation}
Then, given relation (\ref{condF}), the action of $R\in L(\hat{\mathcal{F}})$ on a function $\varphi (g)$ from the space $\hat{\mathcal{F}}$ can be written as
\begin{equation}
	R \varphi(g) = U(h)\left( U^{-1}(h) R U(h) \right)\varphi(s(x)).
	\label{actR}
\end{equation}
Multiplying equation (\ref{ravFF}) by $U^{-1}(h)$ and taking into account that $\eta_\alpha U(h)=-\Lambda_\alpha U(h)$, we obtain
\begin{gather}\nonumber
	U^{-1}(h)[\eta_\alpha + \Lambda_\alpha, R] U(h)\varphi(s(x)) =
	[\eta_\alpha, U^{-1}(h)R U(h)]\varphi(s(x)) = \\ \nonumber
	\eta_\alpha \left( U^{-1}(h) R U(h) \varphi(s(x))\right) = 0.
\end{gather}
Hence, the operator $U^{-1}(h)RU(h)$ is independent of $h$ and (\ref{actR}) can be presented as
\begin{equation}
	R \varphi(g) = U(h) R_M\varphi(s(x)),\quad
	R_M \equiv \left( U^{-1}(h) R U(h) \right)|_{h = e_H} =
	R U(h)|_{h = e_H}.
	\label{actRM}
\end{equation}
Thus, for any operator $R$ from  $L(\hat{\mathcal{F}})$ there exists an operator $R_M$ in a homogeneous space $M$ which acts on the acts on the functions of the space $C^{\infty}(M, V)$.
We call $R_M$ \textit{the projection operator}, $R_M=\pi_{*}R$.
For example, for a linear differential operator of the first order
$$
	R_1 = B^a(x,h)\partial_{x^a} + B^\alpha(x,h)\partial_{h^\alpha} + B(x,h),
$$
the projection is:
\begin{equation}
	R^{(1)}_M = \pi_{*}R_1 =  B^a(x,e_H)\partial_{x^a}+ B^\alpha(x,e_H)\Lambda_\alpha + B(x,e_H).		
	\label{piR1}
\end{equation}
On the other hand, any linear differential operator $R_M$ defined on $C^{\infty}(M, V)$ corresponds to an operator $R =U(h)R_M U^{-1}(h) \in L(\hat{\mathcal{F}})$.
Thus we have the isomorphism $L(\hat{\mathcal{F}})\simeq L(C^{\infty}(M,V))$ given by  (\ref{actRM}).

Let $\xi_X (g)$ be a left-invariant vector field on a Lie group $G$, $X\in \mathfrak{g}$. Since the left-invariant vector fields commute with the right-invariant ones, condition (\ref{condLF}) is fulfilled. Using (\ref{piR1}), we find the corresponding operator on the homogeneous space in the form:
\begin{equation}
		X' = \pi_{*}\xi_X = \xi_X^a(x)\partial_{x^a} + \xi_X^\alpha(x,e_H)\Lambda_\alpha,\quad
		X \in \mathfrak{g}.
	\label{xiX}
\end{equation}
It is easy to verify the following commutation relations:
\begin{gather}
\nonumber
	[X', Y'] = [U^{-1}(h)\xi_X U(h),U^{-1}\xi_Y U(h)]|_{h = e_H} =
	[U^{-1}(h) [\xi_X,\xi_Y] U(h)]|_{h = e_H} = \\ \nonumber
	[U^{-1}(h) \xi_{[X,Y]} U(h)]|_{h = e_H} = [X,Y]'
\end{gather}
for all $X,Y$, of the form (\ref{xiX}), $X,Y\in\mathfrak{g}$.
Consequently, the operators $X'$ corresponding to the left-invariant vector fields
$\xi_X$ are the \textit{generators} of a transformation group acting on  $C^{\infty}(M, V)$.

An operator $R_M\in L(C^{\infty}(M, V))$ is invariant under the action of a Lie group of transformations, if $R_M$ commutes with $X'$:
\begin{equation}\label{commRX}
	[R_M, X'] = [U^{-1}(h) R U(h), U^{-1}(h)\xi_X U(h)]|_{h = e_H} =
	U^{-1}(h)[R,\xi_X]U(h)|_{h = e_H} = 0.
\end{equation}
From (\ref{commRX}) it follows that the operator $R_M$ is invariant with respect to the transformation group  if and only if the corresponding operator $R\in L(\hat{\mathcal{F}})$ commutes with the left-invariant vector fields:
\begin{equation}
	[R,\xi_X] = 0,\quad X\in\mathfrak{g}.
	\label{commXi}
\end{equation}
Suppose that $R^{(1)}_M \in C^{\infty}(M, V)$  is a first-order linear differential operator invariant with respect to the group action. By virue of (\ref{commXi}), this operator corresponds to a first-order polynomial of the right-invariant vector fields:
\begin{equation}\nonumber
	R_{(1)} = B^a \eta_a(x,h) + B^\alpha \eta_\alpha(h) + \tilde B.
\end{equation}
The projection map of $B^\alpha\eta_\alpha(h)$ is a constant $B^\alpha\Lambda_\alpha$, which can be removed from the operator $R^{(1)}_M$ by changing the variable $B=\tilde B+B^\alpha \Lambda_\alpha$.
Therefore,  we can put $B^\alpha=0$ without loss of generality.
Substituting the operator $R_{(1)}$ in condition (\ref{condLF}), we get
\begin{gather}\nonumber
	[\eta_\alpha + \Lambda_\alpha, R_{(1)}]|_{\hat{\mathcal{F}}} =
	\left([b^a,\Lambda_\alpha]\eta_a + b^a[\eta_a,\eta_\alpha] + [B,\Lambda_\alpha]\right)_{\hat{\mathcal{F}}} =\\ \nonumber
	\left( [B^a,\Lambda_\alpha] + B^b C_{b\alpha}^a \right)\eta_a|_{\hat{\mathcal{F}}} +
	[B,\Lambda_\alpha] - B^a C^\beta_{a\alpha}\Lambda_\beta = 0.
\end{gather}
Also, we have the following  system of algebraic equations for the coefficients $B^a$ and $B$:
\begin{align}
	\label{sysBa}
	& [B^a,\Lambda_\alpha] + B^b C_{b\alpha}^a = 0,\\
	\label{sysB}
	& [B,\Lambda_\alpha] - B^a C^\beta_{a\alpha}\Lambda_\beta = 0.	
\end{align}
Under  the conditions (\ref{sysBa})--(\ref{sysB}), the  projection map of $R_{(1)}$ on the homogeneous space yields the desired form of the first-order invariant linear differential operator:
\begin{equation}
	R^{(1)}_M = \pi_{*} R_{(1)} =
	B^a\eta^i_a(x,e_H)\partial_{x^i} + B^a\eta^\alpha_a(x,e_H)\Lambda_\alpha + B.
	\label{R1form}
\end{equation}

Thus, any linear first-order differential operator  acting on functions from  $C^{\infty}(M,V)$ and invariant with respect to the action of the transformation group has the form (\ref{R1form}), where the matrix coefficients $B^a$ and $B$ satisfy the algebraic system of equations (\ref{sysBa})--(\ref{sysB}).

\section{The Dirac equation on a homogeneous space}
\label{dirac-eq-in homogen-space}

In this section, we consider the Dirac equation on a four-dimensional homogeneous space $M$.
We assume that in the four-dimensional homogeneous space $M$, an invariant metric $g_M$ of signature $(+, -, -, -)$ and a Levi-Civita connection are given. Denote by $V_\Psi$  a  space of spinor fields on $M$.

Let us write down the Dirac equation on the space $M$ as an equation on a four-dimensional Lorentzian manifold $M$ \cite{BagrQG} as follows:
\begin{equation}
	\left(i\gamma^k(x)[ \nabla_k + \Gamma_k(x)  ] - m\right)\psi(x) = 0.
	\label{diracM}
\end{equation}
Here $\nabla_k$ is the covariant derivative corresponding to the Levi-Civita connection on $M$ and $m$ is the mass of the field $\psi \in C^{\infty}(M,V_\Psi)$. The Dirac gamma matrices, $\gamma^k(x)$,  satisfy the condition
\begin{equation}
	\{\gamma_i(x),\gamma_j(x)\}=2 g_{ij}(x) E_4,	
	\label{sys_gamma}
\end{equation}
where $E_4$ denotes an identity matrix. The spinor connection $\Gamma_k(x)$ satisfies the conditions $ [\nabla_k + \Gamma_k(x),\gamma_j(x)]=0,\ \operatorname*{Tr} \Gamma_k(x) =0$ and can be presented in explicit form as \cite{BagrQG}:
$$
  \Gamma_k(x) = -1/4 (\nabla_k\gamma_{j}(x))\gamma^j(x).
$$
We will seek a solution to (\ref{sys_gamma}) in terms of a tetrad decomposition:
\begin{equation}
	\gamma^k(x) = \hat\gamma^a \eta^k_a(x,e_H),\quad
	\hat\gamma^b =\gamma^k(x)\sigma^b_k(x,e_H).
	\label{my_sol_gamma}
\end{equation}
The constant matrices $\hat\gamma^a$ are the tetrad components of $\gamma^k(x)$ and satisfy the system of algebraic equations
\begin{equation}
	\{ \hat\gamma^a, \hat\gamma^b \} = 2 G^{ab} E_4.
	\label{sys_gamma2}
\end{equation}
It follows from (\ref{gij_loc}) that the gamma matrices with subscripts are of the form:
\begin{equation}
	\gamma_i(x) = g_{ij}(x)\gamma^j(x) = \hat\gamma_a\sigma^a_i(x,e_H),\quad
	\hat\gamma_a \equiv G_{ab}\hat\gamma^b.
	\label{gamma_down}
\end{equation}
The spinor connection is defined by the following theorem.
\begin{utv}
\label{prop1}
	Let $\Gamma(x) = \gamma^k(x)\Gamma_k(x)$ be the spinor connection on a four-dimensional homogeneous space $M$ with an invariant metric  $g_M$. Then we have
	\begin{equation}
		\Gamma(x) = \hat\gamma^a\left( \Gamma_a + \eta^\alpha_a(x,e_H)\Lambda^s_\alpha \right),\quad	\Gamma_a = -\frac{1}{4}\Gamma^d_{b a}\hat\gamma^b\hat\gamma_d,\quad \Lambda^s_\alpha = -\frac{1}{8}G_{a c} C^a_{\alpha b}[\hat\gamma^b,\hat\gamma^c].
		\label{sp_gamma_x}
	\end{equation}
\end{utv}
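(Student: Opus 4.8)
The plan is to evaluate $\Gamma(x)=\gamma^k(x)\Gamma_k(x)$ starting from the explicit expression for the spinor connection quoted above, $\Gamma_k(x)=-\tfrac14\bigl(\nabla_k\gamma_j(x)\bigr)\gamma^j(x)$, so that
\[
\Gamma(x)=-\tfrac14\,\gamma^k(x)\bigl(\nabla_k\gamma_j(x)\bigr)\gamma^j(x).
\]
First I would substitute the tetrad decompositions (\ref{my_sol_gamma}) and (\ref{gamma_down}), writing $\gamma_j(x)=\hat\gamma_a\sigma^a_j(x,e_H)$ and $\gamma^j(x)=\hat\gamma^b\eta^j_b(x,e_H)$ with the $\hat\gamma^a$ constant. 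Since $\nabla_k$ annihilates the constant matrices, $\nabla_k\gamma_j=\hat\gamma_a(\partial_k\sigma^a_j-\Gamma^l_{kj}\sigma^a_l)$, and after commuting every scalar vielbein factor through the $\hat\gamma$'s one is left with $\Gamma(x)=-\tfrac14\,\hat\gamma^c\hat\gamma_a\hat\gamma^b\,\Omega^a{}_{cb}$, where
\[
\Omega^a{}_{cb}:=\eta^k_c(x,e_H)\,\eta^j_b(x,e_H)\bigl(\partial_k\sigma^a_j-\Gamma^l_{kj}\sigma^a_l\bigr)
\]
is a scalar. The whole problem is thereby reduced to evaluating $\Omega^a{}_{cb}$.

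The second step is to compute $\Omega^a{}_{cb}$ by inserting the Levi--Civita coefficients (\ref{gamma_P_IJK})--(\ref{gamma_P}) of the homogeneous space. The contraction $\sigma^a_i\eta^i_b=\delta^a_b$ collapses the $\Gamma^a_{bc}$--term of (\ref{gamma_P_IJK}) to the purely algebraic coefficient, while the term $-\sigma^b_j\eta^i_{b,k}$ is turned into a derivative of $\sigma^a$ by differentiating $\sigma^a_i\eta^i_b=\delta^a_b$; together with the remaining $\partial_k\sigma^a_j$ this builds an antisymmetrised derivative, which by the Maurer--Cartan equations for the right-invariant coframe of $G$ (restricted to the section $s$) becomes a combination of structure constants. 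The result separates into a part carrying only $\mathfrak m$-type indices, expressed through $\Gamma^a_{bc}$, and a part carrying an $\mathfrak h$-type index $\alpha$, which originates from the last term of (\ref{gamma_P_IJK}) together with the $\mathfrak h$-components of Maurer--Cartan and assembles into the factor $\eta^\alpha_a(x,e_H)$ times a structure-constant combination $C^a_{b\alpha}$. Substituting back and relabelling the indices (and, if necessary, reordering the constant matrices by the Clifford relations (\ref{sys_gamma2})), the $\mathfrak m$-part gives $\hat\gamma^a\Gamma_a$ with $\Gamma_a=-\tfrac14\Gamma^d_{ba}\hat\gamma^b\hat\gamma_d$; here the lowered coefficient $G_{de}\Gamma^d_{ba}$ is antisymmetric under exchange of $e$ and $b$, a consequence of the antisymmetry of the structure constants, which makes $\hat\gamma^b\hat\gamma_d$ effectively a commutator and in particular gives $\operatorname{Tr}\Gamma_a=0$. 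The $\mathfrak h$-part gives $\hat\gamma^a\eta^\alpha_a(x,e_H)\Lambda^s_\alpha$ with $\Lambda^s_\alpha$ proportional to $G_{ac}C^a_{\alpha b}\hat\gamma^b\hat\gamma^c$.

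It then remains to bring $\Lambda^s_\alpha$ to the stated form. For this I would use the $Ad_H$-invariance condition (\ref{AdH_comp}), $G_{ab}C^a_{c\alpha}+G_{ac}C^a_{b\alpha}=0$, which says that $G_{ac}C^a_{\alpha b}$ is antisymmetric under $b\leftrightarrow c$; hence in $G_{ac}C^a_{\alpha b}\hat\gamma^b\hat\gamma^c$ only the antisymmetric part $\tfrac12[\hat\gamma^b,\hat\gamma^c]$ of the product survives, and tracking the numerical factor produces $\Lambda^s_\alpha=-\tfrac18 G_{ac}C^a_{\alpha b}[\hat\gamma^b,\hat\gamma^c]$. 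Collecting the $\mathfrak m$- and $\mathfrak h$-parts gives (\ref{sp_gamma_x}).

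The main obstacle is the bookkeeping in the second step: one has to verify that the derivative-of-vielbein contributions coming from $\partial_k\gamma_j$ and from the Christoffel symbol cancel each other up to a single Maurer--Cartan term, and that after this cancellation every term carrying an $\mathfrak h$-index collapses precisely into the single factor $\eta^\alpha_a(x,e_H)\Lambda^s_\alpha$, with no residual cross terms (in particular the contributions quadratic in the canonical-connection components drop out because $\mathfrak h$ is a subalgebra, so $C^a_{\beta\gamma}=0$ for $a$ an $\mathfrak m$-index). Once the reductive splitting $\mathfrak g=\mathfrak h\oplus\mathfrak m$ and the $Ad_H$-invariance of $G_{ab}$ are used, what remains is routine index manipulation.
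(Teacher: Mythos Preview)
Your proposal is correct and follows essentially the same route as the paper: write $\Gamma(x)$ via the covariant derivative of $\gamma_j$, insert the tetrad decompositions and the Christoffel symbols (\ref{gamma_P_IJK}), so that the computation splits into an $\mathfrak m$-part giving $\hat\gamma^a\Gamma_a$ and an $\mathfrak h$-part giving $\hat\gamma^a\eta^\alpha_a\Lambda^s_\alpha$, and then use $Ad_H$-invariance (\ref{AdH_comp}) to turn $G_{ac}C^a_{\alpha b}\hat\gamma^b\hat\gamma^c$ into a commutator. The paper is slightly more direct: instead of packaging things into your scalar $\Omega^a{}_{cb}$ and invoking Maurer--Cartan, it simply substitutes (\ref{gamma_P_IJK}) and observes the identity $\sigma^\alpha_j(x,e_H)\eta^j_a(x,e_H)=\eta^\alpha_a(x,e_H)$, which absorbs all the vielbein bookkeeping in one line; your Maurer--Cartan argument is a somewhat heavier way to reach the same cancellation.

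One terminological caution: you invoke the ``reductive splitting'' in your last paragraph, but Theorem~\ref{prop1} does \emph{not} require reductivity $[\mathfrak h,\mathfrak m]\subset\mathfrak m$; only the vector-space decomposition $\mathfrak g=\mathfrak h\oplus\mathfrak m$, the subalgebra property $C^a_{\beta\gamma}=0$, and $Ad_H$-invariance of $G_{ab}$ are used (reductivity enters only later, in Theorem~\ref{prop4}). If by ``reductive splitting'' you meant only the direct-sum decomposition, the argument is fine; if you are actually using $C^\beta_{a\alpha}=0$ somewhere in your cancellation, that step should be rechecked, since the paper's proof shows it is unnecessary.
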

\begin{proof}
Let us write $\Gamma (x)$ so that the covariant derivative $\nabla_j \gamma_k$ be expressed in terms of the Christoffel symbols $\Gamma^l_{kj}(x)$:
\begin{equation}
		\Gamma(x) = \frac{1}{4}\gamma^j(x)\gamma^k(x)\left( \partial_{x^j}\gamma_k(x) - \Gamma^l_{kj}(x)\gamma_l(x) \right).
		\label{connection-1}
	\end{equation}
Substituting  the Christoffel symbols  (\ref{gamma_P_IJK}) and expressions (\ref{my_sol_gamma}) and (\ref{gamma_down}) to (\ref{connection-1}), we obtain
\begin{equation}\nonumber
		\Gamma(x) = \hat\gamma^a\Gamma_a + \frac{1}{4} C^d_{b\alpha}\hat\gamma^a\hat\gamma^b\hat\gamma_d \sigma^\alpha_j(x,e_H)\eta^j_a(x,e_H).
		\label{}
	\end{equation}
Using the property (\ref{AdH_comp}) of the invariant metric, we reduce the expression $C^d_{b\alpha}\hat\gamma^b\hat\gamma_d$ to
\begin{equation}\nonumber
		C^d_{b\alpha}\hat\gamma^b\hat\gamma_d =
		C^d_{b\alpha}G_{d c} \hat\gamma^b\hat\gamma^c =
		-C^d_{c\alpha} G_{d b} \hat\gamma^b\hat\gamma^c =
		 \frac{1}{2}C^d_{b\alpha} G_{d c }[\hat\gamma^b,\hat\gamma^c] = 4\Lambda^s_\alpha.
		\label{}
	\end{equation}
From the chain of equalities
\begin{gather}\nonumber
		\sigma^\alpha_j(x,e_H)\eta^j_a(x,e_H) =
		\sigma^\alpha_A(x,e_H)\eta^A_a(x,e_H) - \sigma^\alpha_\beta(e_H)\eta^\beta_a(x,e_H) =\\ \nonumber
		\delta^\alpha_a - (-\delta^\alpha_\beta)\eta^\beta_a(x,e_H) = \eta^\alpha_a(x,e_H),
	\end{gather}
we easily obtain the required expression (\ref{sp_gamma_x}) for the spinor connection.
\end{proof}

Thus we can write down the Dirac equation (\ref{diracM}) on the homogeneous space $M$ with the invariant metric $g_M$ and the gamma matrices of the form (\ref{my_sol_gamma}):
\begin{gather}\nonumber
	\mathcal{D}^0_M \psi = m \psi,\quad
	\mathcal{D}^0_M = i \hat\gamma^a \left[ \eta^j_a(x,e_H)\partial_{x^j} + \Gamma_a  + \eta^\alpha_a(x,e_H)\Lambda^s_\alpha \right].
\end{gather}
 The set of matrices $\Lambda^s_\alpha$ determines the  spinor representation of the group $H$ in the space $V_\Psi$.
\begin{utv}
\label{prop2}
The matrices  $\Lambda^s_\alpha$ are the generators of  the group $H$ representation on the space  $V_\Psi$.
\end{utv}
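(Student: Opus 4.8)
The plan is to verify that the matrices $\Lambda^s_\alpha$ defined in Theorem~\ref{prop1} satisfy the commutation relations of the Lie algebra $\mathfrak{h}$, i.e.\ $[\Lambda^s_\alpha,\Lambda^s_\beta] = C^\gamma_{\alpha\beta}\Lambda^s_\gamma$, so that $\alpha \mapsto \Lambda^s_\alpha$ is a Lie algebra homomorphism $\mathfrak{h}\to\mathfrak{gl}(V_\Psi)$; since $H$ is assumed connected, this integrates to a representation $U^s$ of $H$ on $V_\Psi$ with $\Lambda^s_\alpha = \partial_{h^\alpha}U^s(h)|_{h=e_H}$, which is exactly the claim in the sense of \eqref{inf_hatF}. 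First I would recall from \eqref{sp_gamma_x} that $\Lambda^s_\alpha = -\tfrac{1}{8}G_{ac}C^a_{\alpha b}[\hat\gamma^b,\hat\gamma^c]$, so the whole computation reduces to algebra in the Clifford algebra generated by the $\hat\gamma^a$ subject to \eqref{sys_gamma2}, together with the structure constants of $\mathfrak{g}$ and the $Ad_H$-invariance relation \eqref{AdH_comp}.

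The key computational step is to expand $[\Lambda^s_\alpha,\Lambda^s_\beta]$ using the standard commutator identity for products of gamma matrices: writing $\Sigma^{bc} = \tfrac14[\hat\gamma^b,\hat\gamma^c]$, one has the well-known relation
\begin{equation}\nonumber
[\Sigma^{bc},\Sigma^{de}] = G^{be}\Sigma^{cd} - G^{bd}\Sigma^{ce} + G^{cd}\Sigma^{be} - G^{ce}\Sigma^{bd},
\end{equation}
which follows purely from \eqref{sys_gamma2}. Substituting this into $[\Lambda^s_\alpha,\Lambda^s_\beta] = \tfrac14 G_{ab}C^a_{\alpha c}G_{de}C^d_{\beta f}[\Sigma^{bc},\Sigma^{ef}]$ and contracting indices, I would use \eqref{AdH_comp} in the form $G_{ab}C^a_{\alpha c} = -G_{ac}C^a_{\alpha b}$ (antisymmetry of the "infinitesimal rotation" $(C_\alpha)_{bc} \equiv G_{ab}C^a_{\alpha c}$) to collapse the four terms into a single expression proportional to the matrix commutator of the antisymmetric matrices $C_\alpha$ and $C_\beta$. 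The point is that the map $X \mapsto$ (the $\mathfrak{m}$-block of $\mathrm{ad}_X$) is itself a Lie algebra homomorphism on $\mathfrak{h}$ — this is just the isotropy (linear isotropy) representation — so the commutator $[C_\alpha, C_\beta]$ reproduces $C^\gamma_{\alpha\beta}C_\gamma$ up to the contribution of the $\mathfrak{h}$-components of $[e_\alpha,e_\beta]$ acting inside $\mathfrak{m}$; one must check that, after projecting back through $G_{ab}$ and the antisymmetrized gamma bilinears, those extra pieces either cancel or combine correctly.

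The main obstacle I anticipate is precisely the bookkeeping of the $\mathfrak{h}\oplus\mathfrak{m}$ split inside $[e_\alpha,e_\beta]$: the structure constants $C^a_{\alpha c}$ carry an upper $\mathfrak{m}$-index and a lower $\mathfrak{h}$-index, and when one contracts two of them via the Jacobi identity one gets both $C^c_{\alpha\gamma}C^b_{\gamma f}$-type terms (genuinely $\mathfrak{m}$-to-$\mathfrak{m}$) and terms routed through $\mathfrak{h}$, and only the reductivity-type relation \eqref{AdH_comp} lets one discard the offending ones. I would handle this by invoking the Jacobi identity for the triple $(e_\alpha, e_\beta, e_c)$ projected onto $\mathfrak{m}$, namely $C^a_{[\alpha\beta|}{}^\gamma C^{\cdot}_{\gamma|c]} + (\text{cyclic}) = 0$ read off appropriately, which is the identity that guarantees the linear isotropy representation is a homomorphism; combined with the Clifford identity above this should yield $[\Lambda^s_\alpha,\Lambda^s_\beta] = C^\gamma_{\alpha\beta}\Lambda^s_\gamma$ after the dust settles. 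A cleaner alternative, which I would mention as a remark, is to observe that $\Lambda^s_\alpha$ is (up to the normalization fixed in Theorem~\ref{prop1}) the image of the element $C_\alpha \in \mathfrak{so}(G_{ab}) \subset \mathfrak{spin}$ under the standard spin representation $\mathfrak{so}(p,q)\to\mathfrak{gl}(V_\Psi)$, $C_\alpha \mapsto \tfrac14 (C_\alpha)_{bc}[\hat\gamma^b,\hat\gamma^c]$; since that map is a fixed Lie algebra homomorphism and $\alpha\mapsto C_\alpha$ is the linear isotropy homomorphism by \eqref{AdH_comp}, the composite is automatically a homomorphism, which is the assertion of Theorem~\ref{prop2}.
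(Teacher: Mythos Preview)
Your proposal is correct and, in its main thrust (direct verification of $[\Lambda^s_\alpha,\Lambda^s_\beta]=C^\gamma_{\alpha\beta}\Lambda^s_\gamma$ via Clifford identities and the Jacobi identity), it follows the same route as the paper. The only organisational difference is that the paper does not invoke the $[\Sigma^{bc},\Sigma^{de}]$ formula in one shot; instead it first derives the intermediate relation $[\Lambda^s_\alpha,\hat\gamma^a]=C^a_{b\alpha}\hat\gamma^b$ (and its lowered-index companion), and then uses Leibniz on $\hat\gamma^b\hat\gamma_d$. That intermediate identity is not merely a stepping stone: it is exactly condition~\eqref{sysBa} for the Dirac coefficients and is reused verbatim in the proof of Theorem~\ref{prop3}, so the paper's ordering buys a reusable lemma.

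One clarification about your anticipated obstacle. The ``extra pieces routed through $\mathfrak{h}$'' in the Jacobi identity for $(e_\alpha,e_\beta,e_b)$ are terms of the form $C^\gamma_{\alpha b}C^d_{\beta\gamma}$; they vanish not because of any reductivity-type hypothesis but simply because $\mathfrak{h}$ is a subalgebra, so $C^d_{\beta\gamma}=0$ for $d$ an $\mathfrak{m}$-index. Relation~\eqref{AdH_comp} is the $Ad_H$-invariance of the metric, not reductivity, and is only needed to ensure that $(C_\alpha)_{bc}=G_{ab}C^a_{\alpha c}$ is antisymmetric (i.e.\ lands in $\mathfrak{so}(G)$); reductivity itself enters the paper only later, in Theorem~\ref{prop4}. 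Your ``cleaner alternative'' --- recognising $\Lambda^s_\alpha$ as the composite of the linear isotropy homomorphism $\mathfrak{h}\to\mathfrak{so}(G_{ab})$ with the spin representation --- is a genuinely more conceptual argument that the paper does not give, and it makes the result immediate once one checks those two maps are homomorphisms.
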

\begin{proof}
We clime that the matrices $\Lambda^s_\alpha$ satisfy the commutation relations
	\begin{equation}
		[\Lambda^s_\alpha,\Lambda^s_\beta] = C^\gamma_{\alpha\beta}\Lambda^s_\gamma.
		\label{genH4}
	\end{equation}
	Indeed, the commutator of two matrices $\Lambda^s_\alpha$ and $\Lambda^s_\beta$ can be written as:
\begin{gather}\label{comm3}
		[\Lambda^s_\alpha,\Lambda^s_\beta] = -\frac{1}{4} C^d_{\beta b}  [\Lambda^s_\alpha,\hat\gamma^b\hat\gamma_d] = -\frac{1}{4}C^d_{\beta b}\left( [\Lambda^s_\alpha,\hat\gamma^b]\hat\gamma_d + \hat\gamma^b [\Lambda^s_\alpha,\hat\gamma_d] \right).
	\end{gather}
	Using (\ref{AdH_comp}), (\ref{sys_gamma2}), and (\ref{gamma_down}), we find the commutator of $\Lambda_\alpha$
with the gamma matrices  $\hat\gamma^a$:
\begin{gather}\label{comm1}
		[\Lambda^s_\alpha, \hat\gamma^a] = \frac{1}{4}C^d_{d\alpha} [\hat\gamma^b\hat\gamma_d,\hat\gamma^a] =
		\frac{1}{2} C^d_{b\alpha}\left( \delta^a_d\hat\gamma^b - G^{ab}\hat\gamma_d \right) = \frac{1}{2}\left( G^{bd}C^a_{b\alpha} - G^{ab}C^d_{b\alpha} \right)\hat\gamma_d = C^a_{b\alpha}\hat\gamma^b.		
	\end{gather}
In the same way, we obtain for the gamma matrices with subscript indices:
\begin{equation}
		[\Lambda^s_\alpha,\hat\gamma_a] = C^b_{\alpha a}\hat\gamma_b.
		\label{comm2}
	\end{equation}
Substitution of (\ref{comm1})--(\ref{comm2}) in  (\ref{comm3}) yields
	\begin{equation}
		[\Lambda^s_\alpha, \Lambda^s_\beta] =\frac{1}{4}\left( C^d_{\beta e} C^e_{\beta b} - C^e_{\beta b}C^d_{\alpha e} \right)\hat\gamma^b\hat\gamma_d.
		\label{commLL}
	\end{equation}
	The expression  enclosed in the parentheses takes the form
	\begin{equation}
		C^d_{\beta e} C^e_{\beta b} - C^e_{\beta b}C^d_{\alpha e} =
		\left[C^A_{\alpha b} C^d_{\beta A} +  C^A_{b \beta}C^d_{\alpha A} + C^A_{\beta\alpha}C^d_{b A}\right] + C^\gamma_{\alpha\beta}C^d_{b \gamma}.
		\label{CC_CC}
	\end{equation}
Applying the Jacobi identity for structure constants to the expression in parentheses, we see that it vanishes.
Substituting (\ref{CC_CC}) in (\ref{commLL}), we obtain (\ref{genH4}).
\end{proof}

Let us associate the Dirac operator $\mathcal{D}^0_M$ with an operator $\mathcal{D}^0_G$ using a projection map $\pi_*$ similar to (\ref{R1form}).
\begin{utv}
\label{prop3}
	The Dirac operator $\mathcal{D}_M^0$ in the homogeneous space  $M$ with the invariant metric $g_M$ can be presented as:
\begin{equation}
	\mathcal{D}_M^0 = \pi_{*}\mathcal{D}_G^0,\quad
	\mathcal{D}_G^0 \equiv i\hat\gamma^a[\eta_a(g) + \Gamma_a] \in L(\hat{\mathcal{F}}_\Psi)
	\label{dirac02}
\end{equation}
\end{utv}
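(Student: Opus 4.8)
The plan is to view $\mathcal{D}_G^0$ as a first-order linear differential operator on $C^\infty(G,V_\Psi)$ written as a polynomial in the right-invariant vector fields $\eta_a(g)$ with constant matrix coefficients, and then to feed it into the machinery of Section \ref{invariant-matrix-operator}. In the notation used there, $\mathcal{D}_G^0 = i\hat\gamma^a\eta_a(g) + i\hat\gamma^a\Gamma_a$ is of the form $R_{(1)}$ with $B^a = i\hat\gamma^a$, $B^\alpha = 0$, and $B = i\hat\gamma^a\Gamma_a$, while the representation $U(h)$ of $H$ on $V_\Psi$ is the spinor representation whose generators are the matrices $\Lambda^s_\alpha$ of Theorem \ref{prop2}. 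Granting that $\mathcal{D}_G^0\in L(\hat{\mathcal{F}}_\Psi)$, the projection formula (\ref{R1form}) gives immediately
\[
\pi_*\mathcal{D}_G^0 = i\hat\gamma^a\eta^i_a(x,e_H)\partial_{x^i} + i\hat\gamma^a\eta^\alpha_a(x,e_H)\Lambda^s_\alpha + i\hat\gamma^a\Gamma_a ,
\]
which is precisely the Dirac operator $\mathcal{D}_M^0$ written down after Theorem \ref{prop1}. Thus the whole content of the statement reduces to checking that $\mathcal{D}_G^0$ obeys the defining condition (\ref{condLF}) of $L(\hat{\mathcal{F}}_\Psi)$, that is, that the pair $B^a = i\hat\gamma^a$, $B = i\hat\gamma^a\Gamma_a$ solves the algebraic system (\ref{sysBa})--(\ref{sysB}) with $\Lambda_\alpha = \Lambda^s_\alpha$.

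For (\ref{sysBa}) nothing new is needed: the equation reads $[i\hat\gamma^a,\Lambda^s_\alpha] + i\hat\gamma^b C^a_{b\alpha} = 0$, i.e. $[\Lambda^s_\alpha,\hat\gamma^a] = C^a_{b\alpha}\hat\gamma^b$, which is exactly relation (\ref{comm1}) proved inside Theorem \ref{prop2}. The real work is (\ref{sysB}), which amounts to the identity
\[
[\,i\hat\gamma^a\Gamma_a,\ \Lambda^s_\alpha\,] = i\hat\gamma^a C^\beta_{a\alpha}\Lambda^s_\beta .
\]
I would establish it by substituting $\Gamma_a = -\frac{1}{4}\Gamma^d_{ba}\hat\gamma^b\hat\gamma_d$ from (\ref{sp_gamma_x}) together with the Christoffel coefficients (\ref{gamma_P}) on the left, expanding the commutator by the Leibniz rule and reducing each gamma factor via (\ref{comm1})--(\ref{comm2}) and the Clifford relations (\ref{sys_gamma2}); on the right one rewrites $\Lambda^s_\beta = -\frac{1}{8}G_{ac}C^a_{\beta b}[\hat\gamma^b,\hat\gamma^c]$. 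Both sides become cubic polynomials in the constant matrices $\hat\gamma^a$, and---exactly as in the proof of Theorem \ref{prop2}---the equality follows from the $Ad_H$-invariance (\ref{AdH_comp}) of $G_{ab}$ together with the Jacobi identity for the structure constants $C^A_{BC}$; in the reduction the $\mathfrak{m}$-valued structure constants drop out and only the $\mathfrak{h}$-components $C^\beta_{a\alpha}$ survive, reproducing the right-hand side.

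The main obstacle is precisely this cubic gamma-matrix computation: one has to split the Christoffel coefficients (\ref{gamma_P}) into their antisymmetric part $-\frac{1}{2}C^a_{bc}$ and the $G$-dependent symmetric part, carefully track the raising and lowering of tetrad indices by $G_{ab}$, and arrange the various structure-constant contractions so that a single application of the Jacobi identity closes the argument---this is the same mechanism that makes the parenthesis in (\ref{CC_CC}) vanish. Once (\ref{sysBa})--(\ref{sysB}) are verified, the inclusion $\mathcal{D}_G^0\in L(\hat{\mathcal{F}}_\Psi)$ is exactly the general statement of Section \ref{invariant-matrix-operator}, and evaluating the projection through (\ref{R1form}) yields (\ref{dirac02}). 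As an alternative, the same conclusion can be reached conceptually: $\mathcal{D}_M^0=i\gamma^k(x)[\nabla_k+\Gamma_k(x)]$ commutes with the Killing vector fields lifted to $V_\Psi$, hence is $G$-invariant, so by the isomorphism $L(\hat{\mathcal{F}})\simeq L(C^\infty(M,V))$ it lifts to an operator of the form $B^a\eta_a(g)+B$ in $L(\hat{\mathcal{F}}_\Psi)$; identifying the coefficients via (\ref{piR1}) still requires the computation above.
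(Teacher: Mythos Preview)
Your proposal is correct and follows essentially the same route as the paper: identify $B^a=i\hat\gamma^a$, $B=i\hat\gamma^a\Gamma_a$, dispose of (\ref{sysBa}) via (\ref{comm1}), and reduce (\ref{sysB}) to the cubic gamma-matrix identity $[\Lambda^s_\alpha,\hat\gamma^a\Gamma_a]=-C^\beta_{a\alpha}\hat\gamma^a\Lambda^s_\beta$, which is then verified by expanding $[\Lambda^s_\alpha,\Gamma_a]$ through (\ref{comm1})--(\ref{comm2}) and closing with the Jacobi identity. The only organizational difference is that the paper packages the structure-constant manipulation into the single intermediate relation $C^c_{\alpha e}\Gamma^e_{ba}=C^e_{\alpha a}\Gamma^c_{be}+C^e_{\alpha b}\Gamma^c_{ea}+C^\beta_{\alpha a}C^c_{\beta b}$ (derived directly from (\ref{gamma_P}) and Jacobi) rather than first splitting $\Gamma^a_{bc}$ into its antisymmetric and $G$-symmetric pieces; either bookkeeping leads to the same cancellation.
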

\begin{proof}
Comparing the Dirac operator $\mathcal{D}_M^0$ with the first-order invariant matrix differential operator (\ref{R1form}) on the  homogeneous space $M$, we obtain
\begin{equation}
		B^a = i\hat\gamma^a,\quad B = i \hat\gamma^a\Gamma_a.
		\label{coefBB}
	\end{equation}
The Dirac operator $\mathcal{D}_M^0$ in   (\ref{dirac02}) is defined if the coefficients  $B^a$  and  $B$ of the form (\ref{coefBB}) satisfy equations (\ref{sysBa})--(\ref{sysB}). From (\ref{comm1}) it follows that the commutator of $\Lambda^s_\alpha$ and  $\hat\gamma^a$ satisfies the first condition in (\ref{sysBa}). In this case, condition  (\ref{sysB}) is reduced to the expression
	\begin{equation}\label{condLs2}
	[\Gamma,\Lambda_\alpha] = C^\beta_{a\alpha}\hat\gamma^a\Lambda_\beta,\quad
    \Gamma = \hat\gamma^a\Gamma_a.
	\end{equation}	
The commutator  of $\Gamma$ and  $\Lambda^s_\alpha$ can be presented in terms of the commutator $[\Lambda^s_\alpha,\Gamma_a]$:
	\begin{equation}
		[\Lambda^s_\alpha,\Gamma] = [\Lambda^s_\alpha,\hat\gamma^a]\Gamma_a +
		\hat\gamma^a [\Lambda^s_\alpha,\Gamma_a].
		\label{commLG}
	\end{equation}	
Using (\ref{sp_gamma_x}), in view of properties  (\ref{comm1})--(\ref{comm2}), we obtain:
	\begin{equation}
		[\Lambda^s_\alpha,\Gamma_a] = -\frac{1}{4}\Gamma^d_{ba}\left( [\Lambda^s_\alpha,\hat\gamma^b]\hat\gamma_d +\hat\gamma^b [\Lambda^s_\alpha,\hat\gamma_d] \right) = -\frac{1}{4} \Gamma^d_{ba}\left( C^c_{\alpha d}\hat\gamma^b\hat\gamma_c - C^b_{\alpha c} \hat\gamma^c\hat\gamma_d \right).
		\label{commLGa}
	\end{equation}
	Substituting (\ref{commLGa}) in (\ref{commLG}), we have
	\begin{equation}
		[\Lambda^s_\alpha,\Gamma] = \frac{1}{4}\left( C^e_{\alpha a}\Gamma^c_{be} - C^c_{\alpha c} \Gamma^e_{ba} + C^e_{\alpha b}\Gamma^c_{ea}  \right)\hat\gamma^a\hat\gamma^b\hat\gamma_c.
		\label{commLG3}
	\end{equation}	
From (\ref{gamma_P}) and the Jacobi identity for the structure constants of the Lie algebra $\mathfrak{g}$, it follows that
\begin{equation}
		C^c_{\alpha e}\Gamma^e_{ba} = C^e_{\alpha a}\Gamma^c_{be} + C^e_{\alpha b}\Gamma^c_{ea}+C^\beta_{\alpha a}C^c_{\beta b}.
		\label{CG3}
	\end{equation}
	Substituting (\ref{CG3}) in (\ref{commLG3}), we obtain (\ref{condLs2}).

Thus, relations (\ref{sysBa}) and (\ref{sysB}) are satisfied. Then the Dirac operator $\mathcal {D}_M^0$ can be obtained as the projection of the operator $B^a\eta_a+B$, where $B^a$ and $B$ are determined by (\ref{coefBB}), onto $M$ and we come to the projection map (\ref{dirac02}).
\end{proof}

From this theorem  we immediately obtain
\begin{slt}
\label{coroll1}
The generators
	\begin{equation}\nonumber
		 X' = \xi_X^a(x)\partial_{x^a} + \xi^\alpha_X(x,e_H)\Lambda^s_\alpha,\quad X\in\mathfrak{g},
	\end{equation}
of a representation of the Lie algebra $\mathfrak{g}$  in the space $V_\Psi$ are the symmetry operators of the Dirac operator $\mathcal{D}^0_M$ on the homogeneous space $M$.
\end{slt}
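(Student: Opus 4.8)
The plan is to deduce the corollary directly from Theorem \ref{prop3} together with the general invariance criterion of Section \ref{invariant-matrix-operator}, so that essentially no new computation is needed. By Theorem \ref{prop3} the Dirac operator is the projection $\mathcal{D}^0_M = \pi_{*}\mathcal{D}^0_G$ of $\mathcal{D}^0_G = i\hat\gamma^a[\eta_a(g) + \Gamma_a]$, which is a first-order polynomial in the right-invariant vector fields $\eta_a(g)$ with \emph{constant} matrix coefficients $i\hat\gamma^a$ and $i\hat\gamma^a\Gamma_a$, and which lies in $L(\hat{\mathcal{F}}_\Psi)$.

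First I would check that $\mathcal{D}^0_G$ commutes with every left-invariant vector field $\xi_X$, $X\in\mathfrak{g}$. This splits into two elementary observations: left-invariant and right-invariant vector fields on $G$ commute, whence $[\hat\gamma^a\eta_a(g),\xi_X] = \hat\gamma^a[\eta_a(g),\xi_X] = 0$; and a constant matrix commutes with the scalar first-order operator $\xi_X$, whence $[i\hat\gamma^a\Gamma_a,\xi_X] = 0$. Therefore $[\mathcal{D}^0_G,\xi_X] = 0$ identically on $C^{\infty}(G,V_\Psi)$, which is even stronger than the restricted requirement in (\ref{commXi}). Invoking the equivalence (\ref{commRX})--(\ref{commXi}), the projection $\mathcal{D}^0_M = \pi_{*}\mathcal{D}^0_G$ then commutes with $X' = \pi_{*}\xi_X$ for every $X\in\mathfrak{g}$. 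By formula (\ref{xiX}) the projected operator has the form $X' = \xi_X^a(x)\partial_{x^a} + \xi_X^\alpha(x,e_H)\Lambda_\alpha$, where $\Lambda_\alpha$ is the infinitesimal isotropy action on the value space; since here the value space is the spinor space $V_\Psi$, Theorem \ref{prop2} identifies these generators with the matrices $\Lambda^s_\alpha$, so $X'$ acquires precisely the form stated in the corollary, and the homomorphism property $[X',Y']=[X,Y]'$ established after (\ref{xiX}) guarantees that the $X'$ furnish a representation of $\mathfrak{g}$ on $V_\Psi$.

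There is no genuinely hard step; the only point requiring a moment of care is confirming that passing to the spinor value space does not disturb the argument, i.e.\ that $\mathcal{D}^0_G\in L(\hat{\mathcal{F}}_\Psi)$ (which is exactly the content of Theorem \ref{prop3}) and that the isotropy generators entering (\ref{xiX}) for $V = V_\Psi$ are the $\Lambda^s_\alpha$ of Theorem \ref{prop2}. With these two facts in place the corollary follows at once.
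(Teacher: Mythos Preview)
Your proposal is correct and is exactly the argument the paper has in mind: the corollary is stated immediately after Theorem~\ref{prop3} with the phrase ``From this theorem we immediately obtain,'' and no separate proof is given. Your expansion---$\mathcal{D}^0_G$ is built from right-invariant fields with constant matrix coefficients, hence commutes with every $\xi_X$, and then (\ref{commRX})--(\ref{commXi}) together with (\ref{xiX}) and Theorem~\ref{prop2} yield the stated form of $X'$---fills in precisely the details the paper leaves implicit.
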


\section{The Dirac equation with an external gauge field admitting the motion group of a homogeneous space as a symmetry group}
\label{dirac-eq-gauge-field}

Here we consider the Dirac equation with an external gauge Yang--Mills field on a homogeneous space  $M$. Our aim is to find the Yang--Mills potentials for which the Dirac equation admits the motion group of the homogeneous space as a symmetry group. Let $V_K$  be a set of vector fields on $M$ transforming according to the fundamental representation of an $N$-dimensional gauge Lie group $K$.

A multiplet of $N$ spinor fields on  $M$ can be considered as a space $C^{\infty}(M,V)$ of functions on $M$ which take values on a linear space $V=V_K \otimes V_\Psi $.

The potential  $A_i,\ i=1,\dots,\dim M = 4$, of the gauge Yang--Mills field takes values  in the Lie algebra $\mathfrak{k}$ of the gauge group  $K$, $A_i = g A_i^{\bar{a}}(x)T_{\bar{a}}$, where  $g$  is the coupling constant and the generators $T_{\bar{a}}$ of the gauge group  take values in $V_K$:
\begin{equation}\nonumber
	[T_{\bar{a}},T_{\bar{b}}] = f^{\bar{c}}_{\bar{a}\bar{b}} T_{\bar{c}},\quad
	\bar{a},\bar{b},\bar{c} = 1,\dots,N.
\end{equation}
Here $f^{\bar{c}}_{\bar{a}\bar{b}}$ are the structure constants of the gauge group  $K$.

As a result, the Dirac equation on $M$ with an external non-Abelian gauge field can be written as
\begin{gather}\nonumber
	\mathcal{D}_M \psi = m \psi,\quad
	\mathcal{D}_M = \mathcal{D}_M^0 \mathbb{E}_N + i\hat\gamma^a\eta^j_a(x,e_H)A_j,
\end{gather}
where $\mathbb{E}_N$ is an identity matrix on the space $V_K$.

Let the Dirac equation possess the group of motions $G$ of the homogeneous space $M$ as a symmetry group. Then, the inclusion of an external non-Abelian gauge potential in the Dirac operator breaks its the symmetry, in the general case, since the Dirac operator with the external field, in contrast to $\mathcal {D}^0_M$, is no longer an invariant operator:
\begin{equation}
	\mathcal{D}_M = i \mathbb{E}_4\hat\gamma^a\eta^j_a(x,e_H)\partial_{x^j} +
	i\hat\gamma^a\left(\eta^\alpha_a(x,e_H)\Lambda^s_\alpha\mathbb{E}_4 + \eta^j_a(x,e_H)A_j  \right) + i\Gamma \mathbb{E}_4.
	\label{DM}
\end{equation}
We  will seek the Yang-Mills potentials $A_i$  for which  the symmetry group of the Dirac equation is the group of motions $G$. In this case, the operator $\mathcal{D}_M$ must be invariant under the group of motions and can be presented in the form of  (\ref{R1form}). Comparing (\ref{DM}) with (\ref{R1form}), we obtain
\begin{equation}
	B^a = i\hat\gamma^a\mathbb{E}_4,\quad
	B = i\Gamma\mathbb{E}_4,\quad
	\eta^\alpha_a(x,e_H)\Lambda_\alpha = \eta^\alpha_a\Lambda^s_\alpha\mathbb{E}_4 + \eta^j_a(x,e_H)A_j,	
	\label{kalB}
\end{equation}
and, hence,
\begin{equation}
	\eta^j_a(x,e_H)A_j = \eta^\alpha_a(x,e_H)\Lambda^k_\alpha,
	\label{cond0A}
\end{equation}
where $\Lambda^k_\alpha$ take values on the space  $V_K$.
Let us now multiply (\ref{cond0A}) by $\sigma^a_j(x, e_H)$ and perform summation over $a$. Here $\sigma^a_j(x, e_H)$ is the inverse matrix to $\eta^j_a(x, e_H)$.
Considering that
\begin{gather}\nonumber
	\sigma^a_j(x,e_H)\eta^\alpha_a(x,e_H) =
	\sigma^A_j(x,e_H)\eta^\alpha_A(x,e_H) - \sigma^\beta_j(x,e_H)\eta^\alpha_\beta(e_H) = \\ \nonumber
	\delta^\alpha_j - ( - \delta^\alpha_\beta) \sigma^\beta_j(x,e_H) = \sigma^\alpha_j(x,e_H),
	\label{}
\end{gather}
we finally obtain
\begin{equation}
	A_j = \sigma^\alpha_j(x,e_H)\Lambda^k_\alpha.
	\label{cond1A}
\end{equation}

From (\ref{cond1A}) it follows that the gauge group $K$ of the potentials $A_j$ is isomorphic to the isotropy subgroup $H$ of the homogeneous space $M$.

Also, when condition (\ref{cond1A}) is fulfilled,  the generators $\Lambda_\alpha$ of the representation on the space  $V_K\otimes V_\Psi$  can be written as:
\begin{equation}
	\Lambda_\alpha = \Lambda^s_\alpha\mathbb{E}_4 + \Lambda^k_\alpha.
	\label{factL}
\end{equation}
Substituting (\ref{factL}) in the commutation relations for $\Lambda_\alpha$, we see that
\begin{gather}\nonumber
	[\Lambda_\alpha,\Lambda_\beta] - C^\gamma_{\alpha\beta}\Lambda_\gamma =
	\left([\Lambda^s_\alpha,\Lambda^s_\beta] - C^\gamma_{\alpha\beta}\Lambda^s_\gamma\right)\mathbb{E}_4 +
	[\Lambda^k_\alpha,\Lambda^k_\beta] - C^\gamma_{\alpha\beta}\Lambda^k_\gamma = 0.
\end{gather}
The expression in parentheses vanishes according to Theorem  \ref{prop1}.
Consequently, the generators  $\Lambda^k_\alpha$ satisfy the commutation relations $[\Lambda^k_\alpha, \Lambda^k_\beta] = C^\gamma_{\alpha\beta} \Lambda^k_\gamma$. Thus, $\Lambda^k_\alpha$  are the generators of the isotropy subgroup $H$ on the representation space $V_K$ of the gauge group $K$. Then we have:
\begin{equation}\nonumber
	U(h) = \exp(h^\alpha\Lambda_\alpha^s\mathbb{E}_4)\exp(h^\beta\Lambda_\alpha^k) = U^s(h) U^k(h).
\end{equation}
For gauge fields of the form (\ref{cond1A}), the Dirac operator $\mathcal{D}_M$ takes the form of (\ref{R1form}):
\begin{equation}\nonumber
	\mathcal{D}_M = i\hat\gamma^a\left(\mathbb{E}_4 \eta^i_a(x,e_H)\partial_{i}+\eta^\alpha_a(x,e_H)\Lambda_\alpha\right)+i\Gamma\mathbb{E}_4.
	\label{}
\end{equation}
Relations (\ref{sysBa})--(\ref{sysB}) for the coefficients  $B^a$ and $B$ are the necessary and sufficient conditions for the operators (\ref{R1form}) to be invariant. Substituting (\ref{kalB}) in (\ref{sysBa})--(\ref{sysB}) and taking into account relations (\ref{condLs2}), we obtain the condition:
\begin{equation}
	\hat\gamma^a C^\beta_{a\alpha}\Lambda^k_\beta = 0,\quad
	\alpha,\beta = 1,\dots,\dim \mathfrak{h}.
	\label{cond2A}
\end{equation}
By virtue of the linear independence of the gamma matrices $\hat\gamma^a $, this condition is satisfied if and only if the structure constants $C^\beta_{a\alpha}$ are zero. In other words,  condition (\ref{cond2A}) is equivalent to \textit{reductivity} of the homogeneous space:
\begin{equation}
	[\mathfrak{m},\mathfrak{h}]\subset\mathfrak{m}.
	\label{cond2Aalg}
\end{equation}
Let us say that a Dirac operator $\mathcal{D}_M $ on homogeneous space $M$ with an invariant metric $g_M$ admits an external gauge field
if the motion group of  $M$ is the symmetry group of the Dirac equation.
Thus, we have
\begin{utv}
\label{prop4}
The Dirac operator $\mathcal{D}_M $ admits an external gauge field
if and only if the  homogeneous space is reductive.
\end{utv}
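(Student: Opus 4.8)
The plan is to obtain both implications directly from the comparison of $\mathcal D_M$ with the general invariant first-order matrix operator (\ref{R1form}) that has just been carried out; essentially the whole argument is already in place and only needs to be packaged as an equivalence. For the "if" part I would assume the space is reductive, i.e. $C^\beta_{a\alpha}=0$, take the gauge potential in the form (\ref{cond1A}) with $\Lambda^k_\alpha$ a faithful representation of $\mathfrak h$ on $V_K$, and note (as in the discussion around (\ref{factL})) that $\Lambda_\alpha=\Lambda^s_\alpha\mathbb E_4+\Lambda^k_\alpha$ then satisfies the $\mathfrak h$ commutation relations. The remaining task is to verify the algebraic conditions (\ref{sysBa})--(\ref{sysB}) for $B^a=i\hat\gamma^a\mathbb E_4$, $B=i\Gamma\mathbb E_4$: condition (\ref{sysBa}) holds by (\ref{comm1}) exactly as in Theorem \ref{prop3}, while (\ref{sysB}), after inserting (\ref{condLs2}), reduces to (\ref{cond2A}), which is satisfied because $C^\beta_{a\alpha}=0$. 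Hence $\mathcal D_M$ acquires the invariant form (\ref{R1form}) and the motion group $G$ is a symmetry group of the Dirac equation.

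For the "only if" part I would run the same chain backwards. Assuming $\mathcal D_M$ admits an external gauge field, it commutes with the generators $X'$ and so, by the classification of Section \ref{invariant-matrix-operator}, must have the form (\ref{R1form}); matching this against (\ref{DM}) forces (\ref{kalB}), hence the constraint (\ref{cond0A}) on $A_j$ and, after contracting with $\sigma^a_j(x,e_H)$, the solved form (\ref{cond1A}), and the factorization (\ref{factL}) identifies $\Lambda^k_\alpha$ as generators of $\mathfrak h$ on $V_K$. Substituting $B^a=i\hat\gamma^a\mathbb E_4$, $B=i\Gamma\mathbb E_4$ into (\ref{sysBa})--(\ref{sysB}) and using (\ref{condLs2}) again leaves exactly (\ref{cond2A}). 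Because the $\hat\gamma^a$ act on $V_\Psi$ while the $\Lambda^k_\beta$ act on the independent factor $V_K$, linear independence of the gamma matrices gives $C^\beta_{a\alpha}\Lambda^k_\beta=0$ for each $a$, and faithfulness of $U^k$ then yields $C^\beta_{a\alpha}=0$, i.e. $[\mathfrak m,\mathfrak h]\subset\mathfrak m$ as in (\ref{cond2Aalg}).

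The one point that needs a little care is the step from (\ref{cond2A}) to the vanishing of the structure constants: it rests on first separating the two tensor factors so that $\hat\gamma^a\otimes(C^\beta_{a\alpha}\Lambda^k_\beta)=0$ with linearly independent $\hat\gamma^a$, and then on the gauge representation being faithful, so that a vanishing linear combination $C^\beta_{a\alpha}\Lambda^k_\beta=0$ forces vanishing coefficients. I do not expect any real obstacle beyond this; everything else is a rearrangement of identities already derived, and the proof amounts to reading the chain (\ref{DM})$\to$(\ref{kalB})$\to$(\ref{cond1A})$\to$(\ref{cond2A})$\to$(\ref{cond2Aalg}) in both directions.
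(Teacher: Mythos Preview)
Your proposal is correct and follows the same route as the paper: the theorem is stated immediately after the chain (\ref{DM})$\to$(\ref{kalB})$\to$(\ref{cond1A})$\to$(\ref{factL})$\to$(\ref{cond2A})$\to$(\ref{cond2Aalg}), and the paper's ``proof'' is precisely that preceding derivation read as an equivalence. Your only addition is the explicit appeal to faithfulness of $U^k$ when passing from $C^\beta_{a\alpha}\Lambda^k_\beta=0$ to $C^\beta_{a\alpha}=0$; the paper sweeps this into ``linear independence of the gamma matrices'' and the standing assumption that $U(h)$ is an exact representation, so your remark is a clarification rather than a departure.
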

In this case, the isotropy subgroup $H$ is the gauge group. The potential of the gauge field is determined by the representation generators  $\Lambda^k_\alpha$  of the subgroup $H$ on $V_K$ (see (\ref{cond1A})).
Under the conditions of Theorem  \ref{prop4}, the  Dirac operator $\mathcal{D}_M$ can be redefined as the projection map of the operator
\begin{equation}
	\mathcal{D}_G = i\hat\gamma^a\left( \eta_a(g) + \Gamma \right)\in L(\hat{\mathcal{F}}).
	\label{Dg2}
\end{equation}
Comparing (\ref{Dg2}) and (\ref{dirac02}), we conclude that the external gauge potential (\ref{cond1A}) in the Dirac operator does not change the operator $\mathcal{D}_G$ and leads to the replacement
$$
	U(h) = U^s(h) \rightarrow U(h) = U^k(h)U^s(h).
$$
We are now in a position to give the following theorem:
\begin{utv}
\label{prop5}
Let the Dirac equation on a homogeneous space $M$ with an invariant metric $g_M$ and an external gauge field  admits a group of motions $G$ as the symmetry group. Then the Dirac equation is equivalent to the system of equations
\begin{equation}
	\mathcal{D}_G \psi(g) = m\psi(g),\quad (\eta_\alpha + \Lambda^s_{\alpha}\mathbb{E}_N + \Lambda^k_{\alpha})\psi(g) = 0,
	\label{sysGK}
\end{equation}
where $\mathcal{D}_G$ is the Dirac operator without an external gauge field on the Lie group $G$.
\end{utv}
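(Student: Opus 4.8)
The strategy is to transport the Dirac equation from $M$ to the group $G$ via the dictionary $C^{\infty}(M,V)\simeq\hat{\mathcal{F}}$ of Section~\ref{invariant-matrix-operator}, applied here to $V=V_K\otimes V_\Psi$ with the isotropy representation $U(h)=U^k(h)U^s(h)$. First I would invoke Theorem~\ref{prop4}: the assumption that $G$ is a symmetry group of the Dirac equation forces the space to be reductive and the potential to take the form (\ref{cond1A}), $A_j=\sigma^\alpha_j(x,e_H)\Lambda^k_\alpha$, with $\Lambda^k_\alpha$ the generators of $H$ on $V_K$ (so $K\cong H$). Then the subspace $\hat{\mathcal{F}}\subset C^{\infty}(G,V)$ is, by differentiating $\varphi(hg)=U(h)\varphi(g)$ exactly as in (\ref{inf_hatF}) and using the connectedness of $H$, characterized precisely by the constraint $(\eta_\alpha+\Lambda^s_\alpha\mathbb{E}_N+\Lambda^k_\alpha)\varphi(g)=0$, i.e.\ by the second equation of (\ref{sysGK}).

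The central step is to verify that $\mathcal{D}_G=i\hat\gamma^a(\eta_a(g)+\Gamma)$ lies in $L(\hat{\mathcal{F}})$ for this enlarged $U(h)$, so that $\pi_{*}\mathcal{D}_G=\mathcal{D}_M$. Writing $\mathcal{D}_G=B^a\eta_a+B$ with $B^a=i\hat\gamma^a\mathbb{E}_N$, $B=i\Gamma\mathbb{E}_N$ and $\Lambda_\alpha=\Lambda^s_\alpha\mathbb{E}_N+\Lambda^k_\alpha$ as in (\ref{factL}), the invariance conditions (\ref{sysBa})--(\ref{sysB}) collapse, using the commutator identity (\ref{comm1}) and the relation (\ref{condLs2}) already established for the connection part, to the single requirement $\hat\gamma^a C^\beta_{a\alpha}\Lambda^k_\beta=0$ of (\ref{cond2A}), which holds because reductivity makes $C^\beta_{a\alpha}=0$. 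Hence $\mathcal{D}_G\in L(\hat{\mathcal{F}})$, and the isomorphism $L(\hat{\mathcal{F}})\simeq L(C^{\infty}(M,V))$ of Section~\ref{invariant-matrix-operator} identifies its projection with the gauged Dirac operator $\mathcal{D}_M$; this is the content of the remarks surrounding (\ref{Dg2}).

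It then remains to read off the equivalence. If $\psi\in C^{\infty}(M,V)$ solves $\mathcal{D}_M\psi=m\psi$, set $\varphi(g)=\varphi(hs(x))=U(h)\psi(s(x))$ as in (\ref{condF}); by construction $\varphi\in\hat{\mathcal{F}}$, so the constraint in (\ref{sysGK}) holds, and formula (\ref{actRM}) applied to $R=\mathcal{D}_G$ gives $\mathcal{D}_G\varphi(g)=U(h)(\mathcal{D}_M\psi)(s(x))=U(h)\,m\psi(s(x))=m\varphi(g)$. Conversely, any $\varphi$ satisfying the system (\ref{sysGK}) lies in $\hat{\mathcal{F}}$ (again by connectedness of $H$), hence equals $U(h)\psi(s(x))$ for $\psi(x)=\varphi(s(x))\in C^{\infty}(M,V)$, and projecting the first equation by $\pi_{*}$ returns $\mathcal{D}_M\psi=m\psi$. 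This yields the asserted bijection between the solution spaces.

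I expect the main obstacle to be the middle paragraph: checking that $\mathcal{D}_G$ preserves $\hat{\mathcal{F}}$ once the spinor representation $U^s$ is tensored with the gauge representation $U^k$, i.e.\ reducing (\ref{sysBa})--(\ref{sysB}) to (\ref{cond2A}) and invoking reductivity; everything else is bookkeeping with the isomorphism $\hat{\mathcal{F}}\simeq C^{\infty}(M,V)$. A minor point to dispatch along the way is that $\Lambda_\alpha=\Lambda^s_\alpha\mathbb{E}_N+\Lambda^k_\alpha$ really is a representation of $\mathfrak{h}$, which follows from Theorem~\ref{prop2} for the $\Lambda^s_\alpha$ and from the commutation relations for the $\Lambda^k_\alpha$ recorded after (\ref{factL}).
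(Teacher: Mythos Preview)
Your proposal is correct and follows exactly the line of argument the paper develops in the paragraphs immediately preceding the theorem; in fact the paper offers no separate proof of Theorem~\ref{prop5} at all, treating it as a direct consequence of Theorem~\ref{prop4}, the identification (\ref{factL}), and the observation surrounding (\ref{Dg2}) that the gauge potential (\ref{cond1A}) leaves $\mathcal{D}_G$ unchanged while replacing $U^s(h)$ by $U^k(h)U^s(h)$ in the definition of $\hat{\mathcal{F}}$. Your plan simply makes that implicit reasoning explicit, including the bijection between solution spaces via (\ref{condF}) and (\ref{actRM}), and correctly identifies the reduction of (\ref{sysBa})--(\ref{sysB}) to (\ref{cond2A}) under reductivity as the only nontrivial check.
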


\section{The Dirac equation in an $\mathbb{R}^2\otimes \mathbb{S}^2$ space}
\label{dirac-eq-r2-s2}

The homogeneous space $M$ with the transformation group $G=\mathbb{R}^2\otimes SO(3)$ and the isotropy subgroup $SO(2)$ is topologically isomorphic to the Cartesian product of a two-dimensional plane $\mathbb{R}^2$ and a sphere $\mathbb{S}^2$. Let us introduce a local coordinate system  $(t,x,\phi,\theta)$  on $M$, where $(t,x)\in\mathbb {R}^2$, $\phi\in (0,2\pi ) $, and $\theta\in (0, \pi)$. The invariant metric in the local coordinates is given as
\begin{equation}
	ds^2 = dt^2 - c^{-1}_1 dx^2 - c^{-1}_2 d\Omega_2,\quad
	d\Omega_2 = d\theta^2 + \sin\theta^2d\phi^2,\quad c_1 > 0, c_2 > 0.
	\label{ds2_so3}
\end{equation}
Denote by $\{e_1, e_2, e_3, e_4, e_5 \}$  a fixed basis of the Lie algebra $\mathfrak{g}=\mathbb {R}^ 2 \times \mathfrak {so}(3)$ of the Lie group $G$, where $\mathbb{R}^2=\{e_1, e_2 \},\ \mathfrak {so}(3)=\{e_3, e_4, e_5 \}$. The non-zero commutation relations are
\begin{gather}\nonumber
	[e_3,e_4] = e_5,\quad
	[e_5,e_3] = e_4,\quad
	[e_4,e_5] = e_3.
\end{gather}
The isotropy subalgebra $\mathfrak {h}$ is generated by the basis element $e_5$: $H =\exp(h e_5) $, $h\in (0,2 \pi) $. The invariant metric (\ref{ds2_so3}) is defined by a non-degenerate  2-form $G^{ab} = \operatorname {diag} (1,-c_1,-c_2,-c_2) $ on the basis $\{e_a \}$, where $c_1 > 0, c_2 > 0$.

The local coordinates $(t, x, \phi, \theta, h) $  are defined by canonical coordinates of the second kind on the Lie group $G$:
\begin{equation}
	g(t,x,\phi,\theta,h) = e^{h e_5} e^{(\theta-\frac{\pi}{2}) e_4} e^{\phi e_3} e^{x e_2} e^{t e_1},
	\quad (t,x,\phi,\theta)\in M,\quad h\in H.
	\label{mycanG}
\end{equation}
In the coordinates (\ref{mycanG}), the  right-invariant basis of the 1-forms  $\sigma^A (x, h) $ reads
\begin{gather}
	\nonumber
	\sigma^1 = -dt, \quad
	\sigma^2 = -dx, \quad
	\sigma^3 = -\sin\theta\cos{h} d\phi + \sin{h}d\theta, \\ \nonumber
	\sigma^4 = -\sin\theta\sin{h} d\phi - \cos{h}d\theta, \quad
	\sigma^5 = -\cos\theta d\phi - d h.
\end{gather}
Since the isotropy subgroup is one-dimensional, the gauge group that does not change the symmetry of the equation is Abelian. For $K=U(1)$, we take the potential (\ref{cond1A}) of an external electromagnetic field:
\begin{equation}
	A_1 = A_2 = A_4 = 0,\quad A_3 = -\varepsilon\cos{\theta},\quad \varepsilon\in\mathbb{R}.
	\label{myAA}
\end{equation}
The potential (\ref{myAA}) describes a stationary magnetic field depending on the variable $\theta $ with the strength tensor
\begin{equation}\nonumber
	F = F_{ij}dx^i\wedge dx^j = -\varepsilon\sin{\theta} d\phi \wedge d\theta.
	\label{}
\end{equation}
From Theorem  \ref{prop5} it follows that the Dirac equation on the homogeneous space $M$ with the magnetic field (\ref{myAA}) is equivalent to the system of equations
\begin{equation}
	(i\hat\gamma^a\eta_a(g) - m)\psi(g) = 0,\quad
	(-\partial_{h} + \frac{1}{2 c_2}\hat\gamma^3\hat\gamma^4+\varepsilon)\psi(g)=0,
	\label{sysSO3dirac}
\end{equation}
where $\eta_a(g)$ are given by
	\begin{gather}\nonumber
		\eta_1 = -\partial_t,\quad
		\eta_2 = -\partial_x,\quad
		\eta_5 = -\partial_h,\\ \nonumber
		\eta_3 = -\frac{\cos{h}}{\sin{\theta}}\partial_{\phi}+
		\sin{h}\partial_{\theta}+\cos{h}\cot{\theta}\partial_h,\quad
		\eta_4 = -\frac{\sin{h}}{\sin{\theta}}\partial_{\phi}-
		\cos{h}\partial_{\theta}+\sin{h}\cot{\theta}\partial_h.
	\end{gather}
The left invariant vector fields
 \begin{gather}\nonumber
	\xi_1 = \partial_t,\quad
	\xi_2 = \partial_x,\quad
	\xi_3 = \partial_\phi,\\ \nonumber
	\xi_4 = -\cot{\theta}\sin{\phi}\partial_\phi + \cos{\phi}\partial_\theta +
	\frac{\sin{\phi}}{\sin{\theta}}\partial_h,\quad
	\xi_5 = -\cot{\theta}\cos{\phi}\partial_\phi - \sin{\phi}\partial_\theta +
	\frac{\cos{\phi}}{\sin{\theta}}\partial_h
\end{gather}
on the Lie group $G$ provide  symmetry operators for the Dirac equation (\ref{sysSO3dirac}).

Let us find the spectrum of the Dirac equation (\ref{sysSO3dirac}). The operators $\xi_1$ and $\xi_2$ form an Abelian algebra $\mathbb {R}^2$ and allow one to separate the variables $t$ and $x$:
\begin{equation}\label{pf2}
	\psi(g) = e^{-i (\omega t + j_1 x)}f(\phi,\theta,h),\quad
	i\xi_1\psi(g) = \omega \psi(g),\quad
	i\xi_2\psi(g) = j_1 \psi(g).
\end{equation}
The operators $\xi_3$, $\xi_4$, and $\xi_5$ form a Lie algebra $\mathfrak{so}(3)$. We take into account this symmetry algebra using the noncommutative integration method \cite{SpSh1}.

According to this method, we will solve equations (\ref{sysSO3dirac}) together with the system
\begin{equation}
	\left(\xi_A(g) + l_A(q,\lambda)\right) \psi(g) = 0,\quad
	\left(\eta_A(g) + \overline{l_A(q',\lambda)}\right)\psi(g) = 0,\quad
	(A = 3,4,5).
	\label{sysXL}
\end{equation}
We call  $l_A(q, \lambda)$ \textit{the  operators of $\lambda $-representation} for the Lie algebra $\mathfrak {so}(3)$ on the space $L_2(Q, d \mu (q))$ of functions determined on the Lagrangian submanifold $Q$ of a coadjoint orbit (K-orbit) of the Lie algebra $\mathfrak{so}(3)$ \cite{ShSO3}. The submanifold $Q$ has the topology of a cylinder: $ q =\alpha + i\beta \in Q,\ \alpha\in(0, 2 \pi),\ \beta\in\mathbb {R}^1$.  The covector $\lambda= (j_2, 0,0) \in \mathfrak {so}^{*}(3), \ j_2 > 0$, parameterizes a non-degenerate  K-orbit.  The operators $-il_A(q, \lambda)$ are hermitian in $ L_2 (Q, d \mu(q))$ with respect to the inner product
\begin{equation}\nonumber
	(\psi_1(q),\psi_2(q)) = \int_Q \overline{\psi_1(q)}\psi_2(q)d\mu(q),\quad
	d\mu(q) = \frac{(2 j_2 + 1)!}{2^{j_2}(j_2 !)^2}\frac{dq\wedge d\overline{q}}{(1+\cos(q-\overline{q}))^{j_2 + 1}}
\end{equation}
and are given by the following equations \cite{ShSO3}:
\begin{gather}\nonumber
	l_3(q,\lambda) = -i(\sin(q)\partial_q - j_2 \cos(q)),\quad
	l_4(q,\lambda) = -i(\cos(q)\partial_q + j_2 \sin(q)),\quad
	l_5(q,\lambda) = \partial_q.
	\label{lprSO3}
\end{gather}
In the framework of the non-commutative integration method, the system of equations (\ref{sysXL}) is represented as a generalized Fourier transform \cite{ShBr11, ShBr14}:
\begin{equation}\nonumber
	f(\phi,\theta,h) = \sum_{j_2 = 0}^{\infty}(2 j_2 + 1)\int_Q \psi(q,q',j_2) D^{j_2}_{q \overline{q'}}(\phi,\theta,h) d\mu(q')d\mu(q),
	\label{}
\end{equation}
where $D^{j_2}_{q\overline{q}'}(\phi,\theta,h)$ is given by the formula
\begin{gather}\nonumber
	D^{j_2}_{q \overline{q}'}(\phi,\theta,h) =  \frac{2^{j_2}(j_2 !)^2}{(2 j_2)!}
	\bigg{(}\sin{\theta} \cos{\phi} +
	\cos\left(h+ \overline{q}' \right)(\cos{q}\cos{\phi}-i\sin{\phi})-i\sin{\theta}\cos{q}\sin{\phi}- \\ \nonumber
	i\cos{\theta}\sin{q}+\sin\left(h+\overline{q}'\right)(-i\cos{\theta} \cos{\phi}-\cos{\theta} \cos{q}\sin{\phi}+\sin{\theta} \sin{q})\bigg{)}^{j2}.
	\label{}
\end{gather}
Substituting (\ref{pf2}) in the Dirac equation (\ref{sysSO3dirac})  and taking into account relation (\ref{sysXL}), we obtain a system of equations for $\psi(q,q',j_2)$:
\begin{gather}\label{sysQQ1}
	\left(- \omega \hat\gamma^1 - j_1 \hat\gamma^2 + i \hat\gamma^3 l_3(q',j_2) + i \hat\gamma^4 l_4(q',j_2) - m\right)\psi(q,q',j_2) = 0,\\
	\left( l_5(q',j_2) +\frac{1}{2 c_2}\hat\gamma^3\hat\gamma^4 + \varepsilon\right) \psi(q,q',j_2) = 0	\label{sysQQ2}
\end{gather}
From (\ref{sysQQ2}) it follows that
\begin{equation}
	\psi(q,q',j_2) = \exp\left( -\left[\frac{1}{2 c_2}\hat\gamma^3\hat\gamma^4 + \varepsilon\right] q' \right) R(q,\omega,j_1,j_2).		
	\label{forR}
\end{equation}
Substituting (\ref{forR}) in the reduced equation (\ref{sysQQ1}), we find the linear system of equations 
\begin{equation}\label{sysR}
	\left( \omega \hat\gamma^1 + j_1 \hat\gamma^2 + \left(j_2 + \frac{1}{2}\right)\hat\gamma^3 + \varepsilon \hat\gamma^4 + m\right)R(q,\omega,j_1,j_2) = 0.	
\end{equation}
Finally, from the compatibility conditions for this system, we obtain the following relation determining the spectrum of the Dirac equation:
\begin{equation}\nonumber
	\omega^2 =  j_1^2 c_1 + \left[\varepsilon^2 + \frac{1}{4} + j_2 (j_2 + 1)\right] c_2 + m^2.
\end{equation}
Note that the spectral parameter $j_1$ is continuous because the variable $x$ takes values on the non-compact space $\mathbb{R}^1$. The parameter $j_2 $ has the physical meaning of an orbital quantum number and it takes only a discrete set of values. The expression for $j_2(j_2+1)$ corresponds to the spectrum of the Laplace operator on the sphere $\mathbb{S}^2$.

In the context of Kirillov's coadjoint orbit method \cite{Kirr}, the quantization of $j_2$ is prescribed  by the condition of \textit{integer-valued orbits}. Note that this condition is equivalent to the Kostant-Souriau condition in the context of the geometric quantization method \cite{Konst}.

The basis of solutions for the linear system of equations (\ref{sysR}) reads
\begin{gather}\nonumber
	R^{s=1}(\omega,j_1,j_2) = \left(
		\begin{array}[c]{c}
		\frac{i\sqrt{c_2}(j_2+1/2)-\sqrt{c_1}j_1}{\omega + m}\\
		\frac{\varepsilon \sqrt{c_2}}{\omega + m}\\
		0\\
		1
		\end{array}
	\right),\\ \nonumber
	R^{s=-1}(\omega,j_1,j_2) = \left(
		\begin{array}[c]{c}
		-\frac{\varepsilon \sqrt{c_2}}{\omega + m}\\	
		-\frac{i\sqrt{c_2}(j_2+1/2)+\sqrt{c_1}j_1}{\omega + m}\\
		1\\
		0
		\end{array}
		\right),	
\end{gather}
where the spin index $s$ enumerates the basis vectors. The basis of the original Dirac equation (\ref{sysSO3dirac}) can be written in the framework of the noncommutative integration method as:
\begin{gather}\nonumber
	\psi_{\tau}(t,x,\phi,\theta) = e^{-i(\omega t + j_1 t)}D^{j_2}_q(\phi,\theta) R_\tau,\quad \tau = (j_0,j_1,j_2,q,s),\\ \nonumber
	D^{j_2}_q(\phi,\theta) = \int_Q e^{-(\frac{1}{2 c_2}\hat\gamma^3\hat\gamma^4 + \varepsilon) q'}D^\lambda_{q\overline{q'}}(\phi,\theta,e_H)d\mu(q').
	\label{}
\end{gather}
Here $\tau$  denotes the set of quantum numbers for the basis of solutions.

\section{The Dirac equation in de-Sitter space}
\label{de-sitter}

Consider de Sitter space $M$ as a homogeneous space with de Sitter isometry group of transformations $G=SO(1,4)$ and an isotropy Lorentz subgroup $H=SO(1,3)$. The space  $M$ is topologically isomorphic to $R^1 \times S^3$ and has constant positive curvature.

The de Sitter group $SO(1,4)$ is a rotation group of the 5-dimensional pseudo-Euclidean space with the metric $G_{AB}=\operatorname {diag} (1, -1, -1, -1, -1)$.
The algebra $\mathfrak{g} =\mathfrak {so}(1,4)$ of the de Sitter group can be defined in terms of the basis $\{E_ {AB}\mid A<B \} $
by the following commutation relations:
\begin{equation}\nonumber
	[E_{AB},E_{CD}] = G_{AD}E_{BC} - G_{AC}E_{BD} + G_{BC}E_{AD}-G_{BD}E_{AC},
	\label{}
\end{equation}
where $A,B,C,D = 1,\dots,5$. The basis $E_{AB}$ can be written as
\begin{equation}\nonumber
	E_{ab} = e_{ab} ,\ (a < b),\quad E_{a 5} = e_a/\varepsilon,\quad
	[e_a,e_b] = \varepsilon^2 e_{ab},\quad
	a,b = 1,\dots,4.
\end{equation}
Here the basis $e_{ab}$ forms an isotropy subalgebra $\mathfrak {h}=\mathfrak {so}(1,3)$, and $\varepsilon $ is a parameter defining the curvature of  de Sitter space, $R=12\varepsilon^ 2$. Define  canonical coordinates of the second kind for  a Lie group $G$ by the formula
\begin{equation}
	g(t,x,y,z,h_1,h_2,h_3,h_4,h_5,h_6) =
	e^{h_6 e_{34}} e^{h_5 e_{24}} e^{h_4 e_{23}} e^{h_3 e_{14}}
	e^{h_2 e_{13}} e^{h_1 e_{12}}
	e^{z e_4} e^{y e_3} e^{x e_2} e^{t e_1}.
	\label{can_so14}
\end{equation}
Here $(t, x, y, z)$ are local coordinates on the de Sitter space $M$ and $h_{ab}$ are local coordinates on  the isotropy subgroup $H$. The right-invariant 1-forms $\sigma^{\alpha}(x, e_H)$ in the canonical coordinates (\ref{can_so14}) are written as
\begin{gather}\nonumber
	\sigma^{12} = \varepsilon \sin{\varepsilon x} dt -dh_1,\quad
	\sigma^{13} = \varepsilon \cos{\varepsilon x}\sin{\varepsilon y} dt -dh_2,\\ \nonumber
	\sigma^{14} = \varepsilon \cos{\varepsilon x}\cos{\varepsilon y}\sin{\varepsilon z} dt - dh_3,\quad
	\sigma^{23} = \varepsilon \sin{\varepsilon y} dx - dh_4,\\ \nonumber
	\sigma^{24} = \varepsilon \cos{\varepsilon y}\sin{\varepsilon z} dx - dh_5,\quad
	\sigma^{34} = \varepsilon \sin{\varepsilon z} dy - dh_6.
\end{gather}
For the isotropy subgroup $H$, we choose generators of a representation on a space $V$, $\dim V=4$, in the form
\begin{equation}\nonumber
	\Lambda_{ab} = \frac{1}{4}[\gamma^a,\gamma^b],\quad
    \{ \gamma^a,\gamma^b \} = 2 \eta^{ab}\mathbb{E}_4,\quad \eta^{ab} = \operatorname{diag}(1,-1,-1,-1),\quad a < b.
\end{equation}
The potentials of the external gauge field for which the de Sitter group is the symmetry group of the Dirac equation are given by
\begin{gather}\nonumber
	A_1 = \varepsilon \sin{\varepsilon x}\Lambda_{12}+\varepsilon\cos{\varepsilon x}\sin{\varepsilon y}\Lambda_{13} + \varepsilon \cos{\varepsilon x}\cos{\varepsilon y}\sin{\varepsilon z}\Lambda_{14},\\ \nonumber
	A_2 = \varepsilon \sin{\varepsilon y}\Lambda_{23}+\varepsilon\cos{\varepsilon y}\sin{\varepsilon z}\Lambda_{24},\\ \nonumber
	A_3 = \varepsilon \sin{\varepsilon z}\Lambda_{34},\quad
	A_4 = 0.
\end{gather}
These potentials define the gauge field strength tensor
\begin{gather}\label{desitterF}
	F_{12} = -\varepsilon^2\cos{\varepsilon x}\cos^2{\varepsilon y}\cos^2{\varepsilon z}\Lambda_{12},\quad
	F_{13} = -\varepsilon^2\cos{\varepsilon x}\cos{\varepsilon y}\cos^2{\varepsilon z}\Lambda_{13},\\ \nonumber
	F_{14} = -\varepsilon^2\cos{\varepsilon x}\cos{\varepsilon y}\cos{\varepsilon z}\Lambda_{14},\quad
	F_{23} = -\varepsilon^2\cos{\varepsilon y}\cos^2{\varepsilon z}\Lambda_{23},\\ \nonumber
	F_{24} = -\varepsilon^2\cos{\varepsilon y}\cos{\varepsilon z}\Lambda_{24},\quad
	F_{34} =-\varepsilon^2\cos{\varepsilon z}\Lambda_{34}.
\end{gather}
The solutions of the Dirac equation with the external gauge field (\ref{desitterF}) can be obtained using the noncommutative integration method \cite{SpSh1}. The structure of the K-orbits for the  de Sitter group is rather complicated \cite{Kirr}. Detail consideration of the noncommutative integration method and  exact solution construction for the Dirac equation in the case under consideration is beyond the scope of the present work and will be the subject of a separate study.

\section{Conclusion remarks}
\label{conclusion}

We have obtained the Yang--Mills potentials for which  the symmetry group of the Dirac equation is a group of transformations of a homogeneous space $M$. Such gauge fields are found by  representation of the Dirac equation in terms of an invariant matrix differential operator of the first order. The matrix coefficients of this operator satisfy an algebraic system of equations that is equivalent to the condition that the  gauge fields do not change the symmetry of the Dirac equation, and this is possible if  the homogeneous space $M$ is reductive. For the reductive space $M$, these gauge fields are determined by the Lie symmetry algebra of the equation. The gauge group in this case is isomorphic to a subgroup $H$ of the isotropic homogeneous space $M$.

The noncommutative integration method \cite{SpSh1} can be effectively used to construct solutions of the Dirac  equation with the gauge fields (\ref{cond1A}).

Following this method, we  have found the spectrum and the basis of solutions of the Dirac equation in an  $\mathbb{R}^2 \times \mathbb {S}^2$ space with an invariant metric. This is a reductive homogeneous space, and the potentials (\ref{cond1A}) describe the external magnetic field.
The Dirac equation spectrum is shown to be represented as a continuous part corresponding to a non-compact subgroup of $\mathbb {R}$ and a part corresponding to the spectrum of the Laplace operator on a two-dimensional sphere $\mathbb {S}^2$. Also, we have obtained explicit formulae for the Yang--Mills potentials conserving the symmetry group of the Dirac equation in de Sitter space.

Exact solutions of the Dirac equation with the  gauge fields of the form (\ref{cond1A}) can be used to investigate the effects of vacuum polarization and particle creation in homogeneous cosmological models \cite{Birrel, ShBr14}. Note that seaking exact solutions to the Dirac equation on homogeneous spaces is a mathematically attractive problem, but it is far from trivial even when  external fields are absent \cite{Birrel}. On the other hand, the results obtained contribute to the study of mathematical properties of external fields with internal symmetry.

\section*{Acknowledgments}

The work was partially supported by Tomsk State University Competitiveness Improvement Program and by program ‘Nauka’ under contract No. 1.676.2014/ K.

\end{document}